\newtheorem{theorem}{Theorem}
\newtheorem{assumption}{Assumption}
\begin{document}
\title{Using Circulation to Mitigate Spurious Equilibria in Control Barrier Function}
\author{Vinicius Mariano Gonçalves$^{1}$, Prashanth Krishnamurthy$^{1,3}$, 
        Anthony Tzes$^{1,2}$, and Farshad Khorrami$^{1,3}$
\thanks{$^{1}$Center for Artificial Intelligence and Robotics  (CAIR), New York University Abu Dhabi, United Arab Emirates.}
\thanks{$^{2}$New York University Abu Dhabi, Electrical Engineering, Abu Dhabi 129188, United Arab Emirates.}
\thanks{$^{3}$New York University, Electrical \& Computer Engineering Department, Brooklyn, NY 11201, USA.}
}

\maketitle

\author{IEEE Publication Technology,~\IEEEmembership{Staff,~IEEE,}
\thanks{This paper was produced by the IEEE Publication Technology Group. They are in Piscataway, NJ.}
\thanks{Manuscript received April 19, 2021; revised August 16, 2021.}}



\maketitle

\begin{abstract} 
Control Barrier Functions and Quadratic Programming are increasingly used for designing controllers that consider critical safety constraints. However, like Artificial Potential Fields, they can suffer from the stable spurious equilibrium point problem, which can result in the controller failing to reach the goal. To address this issue, we propose introducing circulation inequalities as a constraint. These inequalities force the system to explicitly circulate the obstacle region in configuration space, thus avoiding undesirable equilibria. We conduct a theoretical analysis of the proposed framework and demonstrate its efficacy through simulation studies. By mitigating spurious equilibria, our approach enhances the reliability of CBF-based controllers, making them more suitable for real-world applications.
\end{abstract} 

\section{Introduction}
 Control barrier functions \cite{8796030} (\textbf{CBF}s) have attracted much interest in recent years in the field of automatic control, since they provide a convenient incorporation of safety constraints as linear inequalities in the control input \cite{SafeTeleop_2018, panagou2019, GuaObsAvoid_2021, SafeCSyn_2021, CBFSingAvoid_2021, SafetyFromFast_2022, LearnBetterCBF_2022, Onboard_Safety_2022, explsol2022, Krstic2022}. This inequality can be inserted into a Quadratic Programming (\textbf{QP}) optimization problem, that is formulated based on two objectives:  making the resulting control action as close as possible to a \emph{nominal controller}, that solves the task (e.g., navigating towards a goal location) if we disregard all the constraints;  ensuring that all the safety constraints are satisfied. Solving this QP optimization problem provides a control input that is able to preserve safety while driving the robot towards the goal. 

 In the QP formulation, the constraints are linear in the  control input provided that we assume an affine dynamics for the system. This linearity appears since we use \emph{local information} of the problem, like distance function and Lyapunov function gradients. This locality of the approach induces \emph{spurious equilibrium points} into the system, and so the closed loop system can become stuck in such a point. Even worse, these spurious equilibrium points can even be \emph{stable}~\cite{9125992}.

 This problem also appears in other approaches that use local information to navigate. A quintessential example is using \emph{artificial potential functions} (\textbf{APF}) \cite{163777}, where a combination of gradients of attractive and repulsive potentials are employed for navigation. The QP-CBF formulation is qualitatively similar to the APF. In fact, the Karush-Kuhn-Tucker conditions (\textbf{KKT}),  the fact that many of the CBF inequalities are generated using gradients of ``obstacle functions,'' and the fact that the nominal controller is usually the negative gradient of a potential function, implies that the CBF-QP controller can be seen as a sum of attractive and repulsive vectors with a  choice of weights that comes from the optimization procedure.

Considering this local equilibrium issue, there are works that aim to modify the traditional QP-CBFs/APF approaches to circumvent it. For example, \cite{9125992} introduces a new CBF inequality into the original CBF formulation that can, in the case that there is a single convex obstacle, remove the equilibrium points in the boundary of the obstacle. In the proposed approach in this paper, the spurious equilibrium point problem is mitigated by introducing an inequality, which we call \emph{a circulation inequality}. This inequality forces the configuration to circulate the obstacle when near it. This is motivated by the observation that QP-CBF approaches often fail in large/complex obstacles, specially when the nominal controller is defined in terms of the gradient of a simple potential function as, for example, the squared distance between the configuration and the goal in the configuration space.  This is because the controller seeks to decrease this potential function, and for large obstacles this may be impossible to do in all situations: sometimes the configuration must move around (i.e., circulate) the obstacle, resulting in temporary increase of this function so as to enable eventual reaching of the goal at a later time. 

The idea of inducing circulation of obstacles appears in other works as well. Early approaches for circulating obstacles appear in the \emph{bug algorithm} \cite{6280933}. The proposed inequality can be seen as a low-level implementation of this circulation aspect, although the proposed controller does not realize exactly the bug algorithm or any of its variants. In order to avoid obstacles, \cite{giroscopic2003} proposes the addition of a circulation term, called ``gyroscopic force'', and \cite{ataka2018, magnetic2022} propose the addition of a circulation term based on magnetic fields. Furthermore, in  \cite{GAO2023104291}, the authors propose a modification of the classic attractive/repulsive potential fields approach so as to improve its ``efficiency'' while retaining the Lyapunov stability analysis from traditional potential fields. The approach is motivated by the observation that if the attractive and repulsive vectors conflict with each other - i.e., if the cosine of the angle between them is less than 0 - then, within a certain range of angles, the repulsive vector can be replaced by a different vector that is orthogonal to the attractive vector without violating the Lyapunov inequalities for the closed loop system. This modification of the attractive/repulsive potential fields approach allows the robot to move faster towards the goal since the impact of the repulsive potential is reduced in the system. This modification of making (in some circumstances) the repulsive and attractive fields orthogonal can be seen as a way to generate a circulation behavior in the controller. However, although this technique can remove some spurious equilibrium points in some situations, it cannot completely solve the problem. The ``modification of the repulsive/attractive vectors'' philosophy is also employed in this work, since the proposed optimization formulation is conceptually similar to APF approaches. The difference is that the modification is introduced in this paper by adding a new constraint into the optimization problem. This proposed strategy allows us to remove the spurious equilibrium points in some circumstances, as shown in Figure~\ref{fig:comparison}.
\vspace*{-8mm}
\begin{figure}[htbp]
\subfloat[\centering Without circulation]{\includegraphics[width=4.53cm,trim={6.5cm 9.5cm 5cm 8cm},clip]{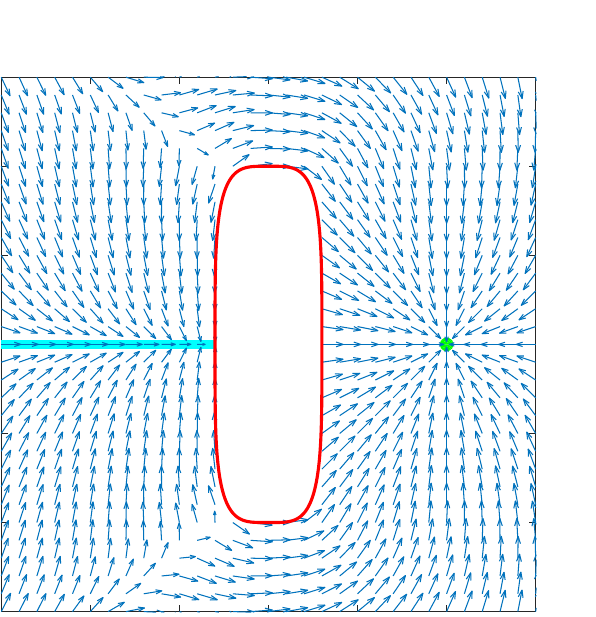}}
\subfloat[\centering With circulation]{\includegraphics[width=4.08cm,trim={6.5cm 9.5cm 6cm 8cm},clip]{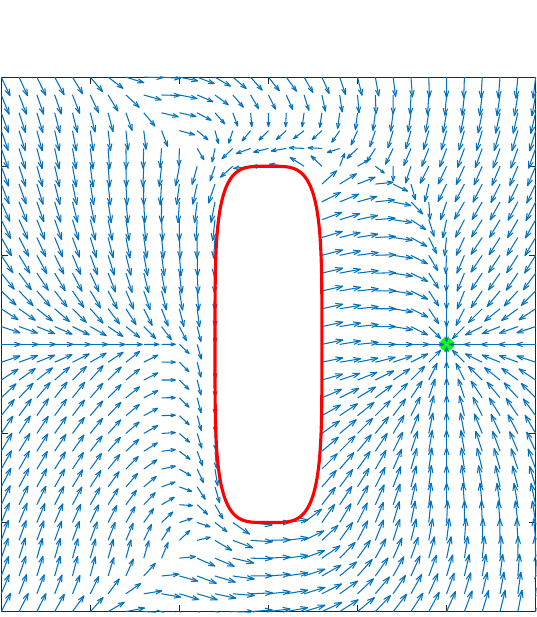}}
\centering
\caption{Vector field generated by traditional CBF-QP formulations \eqref{eq:formulation} (left) and  our formulation
\eqref{eq:formulation2} (right), with a nominal controller that points towards the goal. Note that in the original formulation, there is a set of points (at the left of the obstacle) such that if we start from this set, we converge to  spurious equilibria near the boundary of the obstacle instead of converging towards the target.}
\label{fig:comparison}
\end{figure}

The problem considered in \cite{9992435}, which is to track a path while avoiding obstacles,  is similar to the problem addressed in this paper. To implement obstacle avoidance, a ``circulation mode'', based on the controller proposed in \cite{goncalves2010}, is initiated when the robot is near the obstacle until it can return to navigating towards the target. This switching between circulation and goal-directed navigation modes is done by monitoring the sign of a variable and can produce discontinuities in the vector field. Our proposal is similar in the sense that we also enforce circulation when near an obstacle, but the enforcing of this behavior is achieved by imposing a circulation constraint, which gradually increases in its effect as the robot approaches the obstacle, instead of switching the vector fields in a discontinuous fashion. Since our formulation is based on a strictly convex optimization problem with continuous data, our proposed construction can be shown to have continuity properties using classical theorems on QPs.

 In this paper, we extend the QP-CBF formulations with a circulation constraint, providing certain theoretical results (as characterization of the equilibrium points, feasibility, and continuity). The major challenge is related to how to \emph{define} circulation in higher dimensions. In order to implement the constraint, we need to specify precisely \emph{how} to circulate the obstacles.  Clockwise or counter-clockwise can be used in planar situations, but when we consider higher dimensions the situation is more complex, in which topology may forbid us to defining circulation in a continuous way. 

 \noindent{\em Paper organization:} The notations are summarized in Section~\ref{sec:notations}. The proposed CCBF-QP (Circulation Control Barrier Function Quadratic Program) approach and its properties are discussed in Section~\ref{sec:formulation}. Practical aspects in application of the proposed approach are discussed in Section~\ref{sec:practical}. Simulation studies of the approach are presented in Section~\ref{sec:simulations}. Finally, concluding remarks are summarized in Section~\ref{sec:conclusion}.
\section{Mathematical Notation}
\label{sec:notations}
We denote by $\mathbb{R}^+~(\mathbb{R}^-)$ the set of nonnegative~(nonpositive) reals. All vectors are column vectors unless stated otherwise, and for a vector $v$, $v^T$ denotes its transpose. If $f: \mathbb{R}^n \mapsto \mathbb{R}$, $\nabla f$ is its gradient, written as a column vector. For a closed set $\mathcal{C}$, we denote its boundary by $\partial \mathcal{C}$ and $\textsl{Int}(\mathcal{C}) \triangleq \mathcal{C}-\partial \mathcal{C}$ its interior. We denote the identity matrix of appropriate order by $I$. A square matrix is \emph{skew-symmetric} if $\Omega^T=-\Omega$ and \emph{orthonormal} if $\Omega^T\Omega=I$.

\section{The CCBF-QP formulation and its properties}
\label{sec:formulation}
\subsection{The CCBF-QP formulation}

We are interested in defining a controller that reaches a goal in the configuration space while avoiding obstacles. For this purpose, we will state certain assumptions (and related definitions) that will allow us to describe the problem and our proposed solution formally.

\begin{assumption} \label{assumption:1} \ \ \
We assume the following:
\begin{enumerate}
\item \label{item:assump1-1} Our configuration space  $\mathcal{Q}$ is an \emph{even-dimensional} Euclidean space, that is, $\mathcal{Q} \subset \mathbb{R}^n$, $n$ even. Note that $\mathcal{Q}$ can include forbidden configurations.

\item \label{item:assump1-3} We consider a single $n$-dimensional 
obstacle $\mathcal{C} \subset \mathbb{R}^n$ in the configuration space that is modeled as a closed set. Furthermore, we assume a function $D: \mathbb{R}^n \mapsto \mathbb{R}^+$ that computes a metric of ``closeness'' from the point $q \in \mathcal{Q}$ and $\mathcal{C}$. $D$ is not required to be the Euclidean distance between $q$ and $\mathcal{C}$, but 
\begin{equation}
D(q) > 0 ~\Rightarrow q \not \in \mathcal{C}.
\end{equation}
Furthermore, $D$ is assumed to be differentiable (i.e., $\nabla D(q)$ is continuous) everywhere in $\mathbb{R}^n - \textsl{Int}(\mathcal{C})$ and $\nabla D(q) \not= 0$ almost everywhere. Define the set $\mathcal{W} \triangleq \{q \in \mathbb{R}^n \ | \ q \in \mathbb{R}^n - \textsl{Int}(\mathcal{C}) \ , \ \nabla D(q) \not=0\}$. For $q \in \mathcal{W}$, we also define the \emph{normal vector} $N(q) \triangleq \nabla D(q)/\|\nabla D(q)\|$ 
.
\item \label{item:assump1-4} A skew-symmetric orthonormal matrix is defined such that $\Omega \in \mathbb{R}^{n \times n}$. 
If $q \in \mathcal{W}$, we define the \emph{tangent vector} as $T(q) \triangleq \Omega N(q)$.

\item \label{item:assump1-5} We aim to achieve a target configuration $q_{g} \in \mathcal{W}$, and that we have a \emph{nominal} controller $u_d: \mathbb{R}^n \mapsto \mathbb{R}^n$ that achieves this task, as the negative gradient of a positive definite differentiable function $V(q)$ such that $V$ and $\nabla V$ vanish only at $q=q_{g}$. Furthermore, we assume that $\|\nabla V(q)\|$ is upper bounded by all $q$, i.e., $\displaystyle \max_{q \in \mathbb{R}^n} \|\nabla V(q)\|$ is finite.

\item \label{item:assump1-8} We have a set of configuration-dependent constraints to our control input, written as $\mathcal{U}(q) \triangleq \{\mu \ | \ A(q) \mu \geq b(q)\}$, in which $A: \mathbb{R}^n \mapsto \mathbb{R}^{l \times n}$ and $b: \mathbb{R}^n \mapsto \mathbb{R}^{l}$ are continuous functions of $q$. Furthermore, $\forall q$, let the set $\mathcal{U}(q)$ contain a ball of radius $r > 0$ centered at the origin, i.e., if $\|\mu\| \leq r$, then $A \mu \geq b$. 

\item \label{item:assump1-6} We have a continuous function $\alpha: \mathbb{R}^+ \mapsto \mathbb{R}^-$ such that $\alpha(0)=0$ and $\alpha(D)<0$ for $D < 0$ (e.g., $\alpha(D) = -\eta D$ for a constant $\eta > 0$) .

\item \label{item:assump1-7} We have a continuous function $\beta: \mathbb{R}^+ \mapsto \mathbb{R}$ with the following properties: $\beta$ is decreasing, $\beta(0) > 0$, and $\beta(D) \rightarrow -\infty$ as $D \rightarrow \infty$ (e.g., $\beta(D) = a - bD$ for positive constants $a,b$). Furthermore, $\beta(D(q_g)) < 0$ and $\beta(0) < r$.

\item \label{item:assump1-2} Our system's dynamics is $\dot{q} = u$
.
\end{enumerate}
\end{assumption}

The most striking assumption is related to the even-dimensional assumption in Assumption \ref{assumption:1}-\ref{item:assump1-1}. It turns out that even-dimensional spaces are more amenable for defining the circulation of obstacles than odd-dimensional spaces. This is a topological issue: the \emph{hairy ball theorem} \cite{10.2307/2320587} establishes that there is no way to define a continuous and non-vanishing circulation field (i.e., tangent to the surface) on the surfaces of odd-dimensional balls. Thus, since $N$ is in the surface of a $n$-dimensional ball (thanks to $\|N\|=1$ and $N   \in \mathbb{R}^n$), creating a continuous non-vanishing function that maps every possible $N$ to a vector orthogonal to it would be tantamount to generating a continuous non-vanishing field at the surface of the $n$-sphere, which is forbidden by the hairy ball theorem when $n$ is odd. As a consequence of this result, there is no such matrix $\Omega$ as defined in Assumption \ref{assumption:1}-\ref{item:assump1-4} if $n$ is odd\footnote{Another way to see it without the hairy ball theorem is that if $n$ is odd and $\Omega$ is skew-symmetric, $\det(\Omega^T) = \det(-\Omega)$ implies $\det(\Omega)=(-1)^n\det(\Omega)$ and thus $\det(\Omega)=0$. Thus, odd-dimensional skew-symmetric matrices are non-invertible and consequently cannot be orthonormal.}, because $F(N) = \Omega N$ would define a continuous and non-vanishing tangential field in the surface of an odd-dimensional sphere. In Subsection \ref{subs:odddimensions}, we discuss a workaround for the case when $n$ is odd.

If we want to reach the target $q_{g}$ while avoiding the obstacle, a common approach is to formulate a \emph{minimally invasive QP using CBFs} and compute the control input $u$ as:
\begin{eqnarray}
\label{eq:formulation}
u(q) = && \arg \min_\mu \ \|\mu-u_d(q)\|^2  \nonumber \\
\mbox{such that}&& N(q)^T \mu \geq \alpha(D(q)) \nonumber  \\
&& A(q) \mu \geq b(q)
\end{eqnarray}

\noindent for $q \in \mathcal{W}$ (we leave $u(q)$ undefined if $q \not \in \mathcal{W}$). The QP tries to minimally modify the nominal controller $u_d(q)$ while imposing the CBF constraint $\dot{D} = \nabla D(q)^T \mu \geq  \|\nabla D(q)\| \alpha(D(q)) \geq 0$ that guarantees that the system will not enter the forbidden set $\mathcal{C}$.

Unfortunately, the formulation in \eqref{eq:formulation} is prone to have stable equilibrium points besides the only desired one $q=q_{g}$. The result of \eqref{eq:formulation} is essentially a sum of potential fields, which is known for having spurious equilibrium points, dependent on the repulsive potential $N(q)$. Indeed, if we disregard the constraints $A(q)u \geq b(q)$, \eqref{eq:formulation} can be solved analytically:
\begin{equation*}
u(q) = -\nabla V(q) + N(q) \lambda(q)
\end{equation*}
\noindent in which $\lambda(q) = \max\big(0,N(q)^T \nabla V(q)+\alpha(D(q))\big)$, the dual variable for the normal constraint, is a nonnegative configuration-dependent weight for the repulsive potential.

These spurious equilibrium points appear near the surface of the obstacle when the controller cannot force the configuration to circulate the obstacle in order to escape from it. Therefore, in this paper, we augment \eqref{eq:formulation} with an inequality that forces the configuration to circulate the obstacle when in its neighborhood by including the inequality:
\begin{equation*}
T(q)^T u \geq \beta(D(q))
\end{equation*}

\noindent (in which $\beta$ was defined in Assumption \ref{assumption:1} - \ref{item:assump1-7}) that forces the following behavior. When near obstacles (that is, when $\beta > 0$), we have a positive projection into the tangent vector of the obstacle, which means that we should circulate it. If $D$ is large enough, $\beta$ becomes very negative and thus the constraint is trivialized
. This is specially true for $q=q_g$: we do not want to force any circulation when the goal is achieved 
and this is why Assumption \ref{assumption:1} - \ref{item:assump1-7} has $\beta(D(q_g))<0$. Nevertheless, after the inclusion of this inequality into \eqref{eq:formulation}, we have the \textbf{CCBF-QP} (Circulation Control Barrier Function Quadratic Program), including the validity of all terms in Assumption \ref{assumption:1}:
\begin{eqnarray}
\label{eq:formulation2}
u(q) = && \arg \min_{\mu} \ \|\mu-u_d(q)\|^2 \nonumber \\
\mbox{such that} && N(q)^T \mu \geq \alpha(D(q)) \nonumber  \\
&& T(q)^T \mu \geq \beta(D(q)) \nonumber  \\
&& A(q) \mu \geq b(q).
\end{eqnarray}

Figure \ref{fig:comparison} makes the comparison in the computed vector fields between \eqref{eq:formulation} and \eqref{eq:formulation2}, disregarding the constraints $A(q)\mu \geq b(q)$, for a smooth rectangle-like obstacle and goal $p_{\textsl{goal}} = [1 \ 0]^T$ (green dot in the two pictures). We can see that without the circulation term, there is a spurious line of attractive points (in cyan), with a very large attractive basin to the left of the obstacle. If we include the circulation terms, there are no spurious attractive points anymore
.

Next, we establish some properties for this formulation.

\subsection{Feasibility of CCBF-QP}

One important question is whether the CCBF-QP formulation always admits a feasible solution. The following result answers this question positively.

\begin{theorem} \label{prop:feasibility} Let $K$  be such that $\beta(0) < K < r$ (such $K$ always exists, since $\beta(0)<r$, see Assumption \ref{assumption:1}-\ref{item:assump1-7}). Then, if $q \in \mathcal{W}$, the CCBF-QP formulation  always has a feasible solution: $\mu_f \triangleq K T(q)$. Furthermore, this particular solution lies in the interior of the feasible set, achieving all inequalities without equality.
\end{theorem}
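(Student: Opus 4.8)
The plan is to verify directly that the proposed point $\mu_f = K T(q)$ satisfies each of the three inequalities defining the feasible set of the CCBF-QP, and moreover satisfies each of them strictly. Everything hinges on two elementary algebraic facts about the tangent vector $T(q) = \Omega N(q)$, which I would establish first. Since $\Omega$ is orthonormal and $\|N(q)\| = 1$, we get $\|T(q)\| = \|\Omega N(q)\| = \|N(q)\| = 1$, because $\|\Omega x\|^2 = x^T \Omega^T \Omega x = x^T x$. Since $\Omega$ is skew-symmetric, the scalar $N(q)^T \Omega N(q)$ equals its own transpose $-N(q)^T \Omega N(q)$ and hence vanishes; thus $N(q)^T T(q) = 0$, i.e.\ $N$ and $T$ are orthogonal unit vectors. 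These two facts are the whole engine of the argument.

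With these in hand, I would check the constraints one at a time. For the normal (CBF) constraint, $N(q)^T \mu_f = K\, N(q)^T T(q) = 0$, and since $\alpha$ maps into $\mathbb{R}^-$ we have $\alpha(D(q)) \le 0 = N(q)^T \mu_f$, with strict inequality whenever $D(q) > 0$ because then $\alpha(D(q)) < 0$. For the circulation constraint, $T(q)^T \mu_f = K\, \|T(q)\|^2 = K$; here the choice of $K$ does the work, since $\beta$ is decreasing and $\beta(0) > 0$, so $\beta(D(q)) \le \beta(0) < K = T(q)^T \mu_f$, a strict inequality for every $q \in \mathcal{W}$.

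For the input constraint I would invoke Assumption~\ref{assumption:1}-\ref{item:assump1-8}. Because $\beta(0) < K$ and $\beta(0) > 0$, we have $K > 0$, so $\|\mu_f\| = K\,\|T(q)\| = K$, and the bound $K < r$ gives $\|\mu_f\| < r$. Thus $\mu_f$ lies in the \emph{open} ball of radius $r$ about the origin; since the closed ball of radius $r$ is contained in $\mathcal{U}(q)$ by assumption, this open ball is an open subset of $\mathcal{U}(q)$ and therefore lies in its topological interior, yielding $A(q)\mu_f > b(q)$ strictly. Collecting the three verifications establishes both feasibility of $\mu_f$ and the interior claim.

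I do not expect a serious obstacle, as the statement is essentially a verification; the only delicate points are the two strictness claims. The input-constraint strictness requires the slightly careful topological observation above (a point strictly inside a ball contained in a polyhedron is an interior point of that polyhedron), rather than merely $A(q)\mu_f \ge b(q)$. The normal-constraint strictness is the one place the claim is tight: it holds strictly exactly when $D(q) > 0$, and degenerates to equality on $\partial\mathcal{C}$, where $D(q) = 0$ forces $\alpha(D(q)) = 0 = N(q)^T\mu_f$; I would therefore phrase the interior conclusion for $q$ with $D(q) > 0$ and flag this boundary case explicitly.
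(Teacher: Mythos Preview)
Your proposal is correct and follows essentially the same route as the paper: verify each of the three constraints of \eqref{eq:formulation2} using $N(q)^T T(q)=0$, $\|T(q)\|=1$, and $\|\mu_f\|=K<r$. Your explicit flagging of the boundary case $D(q)=0$ is in fact sharper than the paper's own proof, which simply asserts $\alpha(D(q))<0$ for $q\notin\mathcal{C}$ without addressing points of $\partial\mathcal{C}\cap\mathcal{W}$, where the normal constraint indeed degenerates to equality.
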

\begin{proof} We shall verify all of the three constraints in \eqref{eq:formulation2}, and emphasize that the inequalities are achieved without equality.

\begin{itemize}
    \item For $N(q)^T \mu_f \geq \alpha(D(q))$, we note that $N(q)^TT(q) = 0$ and $\alpha(D(q)) < 0 $ if $q \not \in \mathcal{C}$. 
    \item For $T(q)^T \mu_f \geq \beta(D(q))$, we note that  $T(q)^TT(q) = \|T(q)\|^2 = 1$ and that $K> \beta(D(q))$ since $\beta$ being decreasing (Assumption \ref{assumption:1}-\ref{item:assump1-7}) implies that the maximum value for $\beta$ occurs at $D=0$ and thus $K > \beta(0) \geq \beta(D(q))$.
    \item  For the constraint $A\mu \geq b$, we use the fact that $\|\mu\| \leq r$ is a sufficient condition for this inequality to hold (Assumption \ref{assumption:1}-\ref{item:assump1-8}). Now $\|  K T(q)\| = K < r$. The inequality being strict means that $\mu_f$ lies strictly inside the set $A \mu \geq b$, i.e., $A \mu_f > b$.
\end{itemize}  
\end{proof}

This implies that CCBF-QP allows as a feasible action for the system to move tangent to its current contour surface of $D(q)$ (in the direction specified by $T(q)$).

\subsection{Continuity of CCBF-QP}

We now prove the continuity of $u(q)$.

\begin{theorem} If $u(q)$ comes from a CCBF-QP, it is a continuous function of $q$ for any $q \in \mathcal{W}$.
\end{theorem}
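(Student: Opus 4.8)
The plan is to treat the CCBF-QP as a parametric strictly convex quadratic program and to conclude continuity of its unique minimizer from classical parametric-optimization results. First I would note that for each fixed $q \in \mathcal{W}$ the objective $\mu \mapsto \|\mu - u_d(q)\|^2$ is strongly convex (its Hessian is $2I$) and coercive, so the minimizer over the closed convex feasible polyhedron exists and is unique. This uniqueness is essential: it turns the solution from a set-valued object into a genuine single-valued map $u(q)$, which is the only way a continuity (rather than mere upper-hemicontinuity) statement can make sense.

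Second, I would verify that every piece of data defining the program is continuous on $\mathcal{W}$. The nominal term $u_d(q) = -\nabla V(q)$ is continuous since $V$ is $C^1$ (Assumption~\ref{assumption:1}-\ref{item:assump1-5}); the normal $N(q) = \nabla D(q)/\|\nabla D(q)\|$ is continuous on $\mathcal{W}$ because $\nabla D$ is continuous and nonvanishing there by the definition of $\mathcal{W}$, so the normalization is well-defined and continuous; $T(q) = \Omega N(q)$ inherits continuity from $N$; the right-hand sides $\alpha(D(q))$ and $\beta(D(q))$ are continuous as compositions of continuous maps; and $A(q), b(q)$ are continuous by Assumption~\ref{assumption:1}-\ref{item:assump1-8}. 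Consequently the feasible-set correspondence $\Phi(q) = \{\mu : N(q)^T\mu \geq \alpha(D(q)),\ T(q)^T\mu \geq \beta(D(q)),\ A(q)\mu \geq b(q)\}$ has continuous defining data, and the objective is jointly continuous in $(\mu, q)$.

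Third, and crucially, I would use Theorem~\ref{prop:feasibility} to supply a Slater-type constraint qualification that varies continuously with $q$. That theorem exhibits $\mu_f = K\,T(q)$ as a point satisfying all three inequalities strictly, and $\mu_f$ is itself a continuous function of $q$. Strict feasibility of a continuously varying point is exactly the hypothesis that gives lower hemicontinuity of $\Phi$ near any $q_0$: feasible points at $q_0$ can be approximated by feasible points at nearby $q$. Upper hemicontinuity follows from continuity of the linear constraint data together with coercivity, which confines the minimizer to a locally uniform bounded region (the objective value at the Slater point $\|K\,T(q) - u_d(q)\|^2$ is locally bounded, so I may intersect $\Phi$ with a large fixed ball without changing the minimizer, obtaining a compact-valued correspondence). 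With both hemicontinuity properties and a jointly continuous objective, Berge's Maximum Theorem gives upper hemicontinuity of the argmin correspondence; combined with uniqueness from strong convexity, the single-valued map $u(q)$ is continuous on $\mathcal{W}$.

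The main obstacle I anticipate is the rigorous handling of the unbounded feasible polyhedron, since Berge's theorem is cleanest for compact-valued correspondences; the argument must justify the localization to a bounded region using coercivity and the locally bounded objective value at the Slater point. The strictly feasible point of Theorem~\ref{prop:feasibility} is the linchpin here, since it simultaneously secures lower hemicontinuity and the boundedness estimate, so I would take care to exploit that both of its inequalities are strict and that $\mu_f$ depends continuously on $q$.
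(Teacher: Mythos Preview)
Your proposal is correct and follows essentially the same approach as the paper: both arguments identify the CCBF-QP as a strictly convex parametric QP with continuous data on $\mathcal{W}$, invoke Theorem~\ref{prop:feasibility} to obtain a strictly feasible (Slater) point, and then appeal to a parametric-optimization continuity result. The only cosmetic difference is that the paper cites a ready-made QP continuity theorem (requiring only regularity of the constraints and continuity of $\{H,f,G,g\}$), whereas you unpack the same conclusion via Berge's Maximum Theorem together with a coercivity/localization step; the ingredients and logical structure are identical.
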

\begin{proof}
    This is a consequence of a result \cite{Lee2005} that states the following: the single minimizer of a strictly convex quadratic program $\displaystyle \min_{\mu} \mu^TH\mu/2 + f^T\mu \ \mbox{such that} \ G\mu \geq g$ is continuous on the parameters $\{H,f,G,g\}$ as long as the constraint $G \mu \geq g$ is \emph{regular}, which means that there exists $\mu_f$ such that $G \mu_f > g$ (i.e., the interior is non-empty). 

    In \eqref{eq:formulation}, $H = 2I$, $f = 2\nabla V(q)$, $G=[N(q) \ T(q) \ A(q)^T]^T $, and $g(q) = [\alpha(D(q)) \ \beta(D(q)) \ b(q)^T]^T$. The regularity is a corollary of Theorem \ref{prop:feasibility}, that states that $\mu_f$ (defined in that theorem) satisfies the inequalities with all inequalities being strict. Now, the set of Assumptions \ref{assumption:1} guarantees that $f(q), G(q)$, and $g(q)$ are continuous functions of $q$ for $q \in \mathcal{W}$, and the proof is complete.
\end{proof}

\subsection{CCBF-QP Equilibrium Points}
We will characterize the subset of $\mathcal{W}$  in which the output of  CCBF-QP  is the null vector, that is, the \emph{equilibrium points} of the dynamical system $\dot{q} = u(q)$.

\begin{theorem}  \label{prop:equilibrium} Considering the set $q \in \mathcal{W}$, in the CCBF-QP formulation, $u(q) =0$ occurs only in the two mutually exclusive conditions:

\begin{enumerate}
\item $q = q_{g}$;
\item $\beta(D(q))=0$ and $-u_d(q)$ is positive parallel to $T(q)$, that is, there exists a positive scalar $\lambda_T$ for which $u_d+\lambda_T T=0$.
\end{enumerate}
\end{theorem}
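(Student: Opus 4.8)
The plan is to treat $u(q)=0$ as a candidate minimizer of the strictly convex QP~\eqref{eq:formulation2} and to exploit the Karush--Kuhn--Tucker (KKT) conditions, which are both necessary and sufficient here: the objective is strictly convex (Hessian $2I$), and Theorem~\ref{prop:feasibility} supplies a strictly feasible point $\mu_f = KT(q)$, so Slater's condition holds and the KKT system characterizes the unique optimizer exactly. Introducing multipliers $\lambda_N,\lambda_T \ge 0$ for the normal and tangent constraints and $\lambda_A \ge 0$ for $A(q)\mu \ge b(q)$, stationarity at $\mu=0$ reads $-2u_d = \lambda_N N(q) + \lambda_T T(q) + A(q)^T\lambda_A$, together with the complementary-slackness and feasibility relations. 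The strategy is to peel off constraints one at a time until only the tangent term can survive, and then to split on the sign of $\beta(D(q))$.

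First I would dispatch the input constraints. Since Assumption~\ref{assumption:1}-\ref{item:assump1-8} puts a ball of radius $r>0$ around the origin inside $\mathcal{U}(q)$, the point $\mu=0$ lies strictly inside $\{A(q)\mu \ge b(q)\}$; hence every such constraint is inactive and complementary slackness forces $\lambda_A = 0$, collapsing the $A(q)^T\lambda_A$ term. Next I would observe that mere feasibility of $\mu=0$ already restricts $q$: the tangent constraint demands $0 = T(q)^T 0 \ge \beta(D(q))$, i.e.\ $\beta(D(q)) \le 0$. Because $\beta$ is decreasing with $\beta(0)>0$ (Assumption~\ref{assumption:1}-\ref{item:assump1-7}), this confines equilibria to $D(q) \ge D^\star$, where $D^\star>0$ is the unique zero of $\beta$; in particular $D(q)>0$, so $\alpha(D(q))<0$, the normal constraint is strictly slack at $\mu=0$, and $\lambda_N = 0$ as well. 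What remains of stationarity is $-2u_d = \lambda_T T(q)$ with $\lambda_T \ge 0$.

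The conclusion then follows from a dichotomy on $\beta(D(q))$. If $\beta(D(q))<0$, complementary slackness on the tangent constraint reduces to $-\lambda_T\,\beta(D(q))=0$, which forces $\lambda_T=0$, whence $u_d=0$; since $u_d=-\nabla V$ vanishes only at the goal (Assumption~\ref{assumption:1}-\ref{item:assump1-5}), this is case~(1), $q=q_g$. If instead $\beta(D(q))=0$ (the only other option compatible with $D(q)\ge D^\star$), then $\lambda_T$ is unconstrained by slackness and $-2u_d=\lambda_T T(q)$ says exactly that $-u_d$ is a nonnegative multiple of $T(q)$. Here I would argue the multiplier is strictly positive: $\beta(D(q_g))<0$ by Assumption~\ref{assumption:1}-\ref{item:assump1-7}, so $q\ne q_g$ on the set $\{\beta(D)=0\}$, giving $u_d\ne 0$ and hence $\lambda_T>0$; rescaling recovers the statement $u_d + \lambda_T T = 0$ of case~(2). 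Mutual exclusivity is automatic, since case~(1) lives on $\{\beta(D)<0\}$ and case~(2) on $\{\beta(D)=0\}$.

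The main obstacle is conceptual rather than computational: one must notice that the feasibility of the candidate equilibrium is itself informative, ruling out a whole neighborhood of the obstacle ($D<D^\star$) and simultaneously killing $\lambda_N$, so that the normal/repulsive direction plays no role at an equilibrium and only the tangent (circulation) direction can balance $-u_d$. The remaining care lies in establishing the strictness of $\lambda_T$ in case~(2) by separating it from the degenerate goal case via $\beta(D(q_g))<0$. For completeness I would close with the converse: substituting either condition back into the KKT system (with $\lambda_N=\lambda_A=0$ and the appropriate $\lambda_T$) confirms that $\mu=0$ is feasible and stationary, hence the unique optimizer, so the two conditions exactly characterize the equilibria.
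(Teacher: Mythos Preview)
Your proof is correct and follows essentially the same KKT-based route as the paper: eliminate $\lambda_A$ via the interior-ball assumption, deduce $D(q)>0$ (hence $\lambda_N=0$) from feasibility of the tangent constraint at $\mu=0$, and then split on the complementary-slackness relation $\lambda_T\beta=0$. You add a few refinements the paper leaves implicit---explicitly invoking Slater's condition, verifying $\lambda_T>0$ strictly in case~(2) via $\beta(D(q_g))<0$, and sketching the converse---but the core argument is identical.
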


\begin{proof} If $\mu=0$ is optimal, the KKT conditions, for the dual variables $\lambda_N, \lambda_T \in \mathbb{R}$ and $\lambda_A \in \mathbb{R}^{l}$, are \cite{convexanalysis}:
\begin{eqnarray}
\label{eq:kkt}
&& (i) : \ u_d + N \lambda_N + T \lambda_T + A^T \lambda_A = 0; \nonumber \\
&& (ii) : \  \lambda_N\alpha = 0 \ , \ (iii) : \ \lambda_T\beta = 0 \ , \ (iv) : \  \lambda_A^Tb = 0; \nonumber \\
&&  (v) : \ \lambda_N \geq 0 \ , \ (vi) : \ \lambda_T \geq 0 \ , \ (vii) : \ \lambda_A \geq 0; \nonumber \\
&& (viii) : \ 0 \geq \alpha \ , (ix) : \ 0 \geq \beta \ , \ (x) :  0 \geq b.
\end{eqnarray}

The first important fact is that it is not possible for these conditions to hold if $q \in \mathcal{C}$. Indeed, in this case, $\beta(D(q)) > 0$, which contradicts ($ix$).

Thus, if $q \not \in \mathcal{C}$, we have that $\alpha \not= 0$ (Assumption \ref{assumption:1}-\ref{item:assump1-6}). Consequently, we get from ($ii$) that $\lambda_N = 0$. Furthermore, since the polytope $A \mu \geq b$ contains a ball of non-null radius $r$ (Assumption \ref{assumption:1}-\ref{item:assump1-8}), $b$ is a strictly negative vector (since $\mu=0$ should be feasible and in the interior of the polytope), and then, from ($iv$) and ($vii$), we have that $\lambda_A=0$. Hence,
\begin{equation}
\label{eq:kkt2}
(i) : \ u_d + T \lambda_T = 0 \ , \ (iii) : \ \lambda_T \beta=0 \ , \ (vi) : \ \lambda_T \geq 0 .
\end{equation}
From ($iii$), either $\lambda_T = 0$ or $\beta=0$. If $\lambda_T=0$, we have $u_d(q)=0$. Due to Assumption \ref{assumption:1}-\ref{item:assump1-5}, this implies $q=q_{g}$. If $\beta=0$, we have $u_d+T\lambda_T=0$ with $\lambda_T \geq 0$. Note that since $\beta(D(q_{g})) \not =0$ (Assumption \ref{assumption:1}-\ref{item:assump1-7}), the two sets of vanishing points are mutually exclusive.
\end{proof}

\section{Practical aspects}
\label{sec:practical}
\subsection{The function D}
\label{subs:thefunD}
The scalar function $D$ must encompass all the system constraints, codified as an obstacle $\mathcal{C}$ in the configuration space. However, in practice, it is easier to describe these constraints using $m$ scalar functions as $F_i(q) \geq 0$, that is, $\mathcal{C} = \{q \in \mathbb{R}^n \ | \ F_i(q) \geq 0 \ i = 1,2,..,m\}$. In this case, one could use as function $D$ the minimum of these $F_i$, since clearly $D > 0$ implies that $q \not \in \mathcal{C}$. However, even if $F_i$ is everywhere differentiable, the function $D$ defined in this way can have many non-differentiable points. More specifically, whenever we have a $q$ in which the minimum is achieved for at least two different functions $F_i$, $D$ is non-differentiable unless the respective gradients $\nabla F_i$ are equal at this $q$, which is a very strong condition. Being non-differentiable brings several practical problems because the controller will be discontinuous at that point.

One idea to remove this problem is to use an approximate version of the minimum function. The \emph{softmin} function \cite{Gao2017OnTP} is a good candidate, defined (with an averaging term) as $ \min^h(g_1,g_2,...,g_m) \triangleq -h \ln(\frac{1}{m}\sum_k \exp(-g_k/h))$. This function is always differentiable in the arguments $g_i$ and approximates the minimum when $h$ is positive and  close to $0$. However, we can show that $ \min^h(g_1,g_2,...,g_m) \geq \min(g_1,g_2,...,g_m)$, and thus if we choose $D(q) =  \min^h_i F_i(q)$, $D > 0$ may not imply $F_i(q)>0$ for all $i$. In this case, we introduce a margin $\delta  > 0$ and choose $D(q) \triangleq  \min^h_i F_i(q)-\delta$.  Another important aspect is that a naive implementation of the softmin function can easily generate numerical precision problems, since some of the terms $\exp(-g_i/h)$ can be too large (so we have a $\ln(\infty)$ situation) or all of them too small (so we have a $\ln(0)$ situation). In order to avoid this, we can use the fact that $\min^h_i g_i = \min_i g_i + \min^h_i (g_i-\min_i g_i)$. The terms $\hat{g}_i \triangleq g_i-\min_i g_i$ are always positive and at least one of them is $0$, which implies that the terms $\exp(-\hat{g}_i/h)$ will be always less than or equal to one and at least one of them is $1$. This eliminates this numerical issue.

\subsection{Generating matrices $\Omega$}
\label{subs:genOmega}

In order to implement the algorithm, we need skew-symmetric orthonormal matrices, i.e, $\Omega \in \mathbb{R}^{n \times n}$ such that $\Omega^T = -\Omega = \Omega^{-1}$. As mentioned before, they exist only in even-dimensional spaces. A different choice of $\Omega$ implies a different way to circulate the obstacle.

If $n=2$, there are only two such matrices:

\begin{equation}
    \Omega_1 = \left[\begin{array}{cc} 0 & -1\\ 1 & \ 0\end{array}\right] \ , \ \Omega_2 =  -\Omega_1 = \left[\begin{array}{cc} \ 0 & 1\\ -1 & 0\end{array}\right]
\end{equation}

\noindent corresponding to counter-clockwise and clockwise rotations, respectively. But for $n > 2$ (and even), infinite such matrices exist. A simple way to create these matrices is to divide the set $\{1,2,...,n\}$ into $n/2$ pairs $(i_k,j_k)$, $k=1,2,...,n/2$. We then start from the $n \times n$ identity matrix, and multiply each $i_k^{th}$ row by $-1$. We then swap the $i_k^{th}$ row with the $j_k^{th}$ row, and we can check by inspection that the resulting matrix is skew-symmetric and orthonormal. For example, for $n=4$, we can have the pairs $(1,3)$ and $(2,4)$ and then:

\begin{equation}
    \Omega = \left[\begin{array}{cccc}
    \ 0 & \ 0 & 1 & 0 \\
    \ 0 & \ 0 & 0 & 1 \\
    -1 & \ 0 & 0 & 0 \\
    \ 0 & -1 & 0 & 0
    \end{array}\right].
\end{equation}

We can create $n!/(n/2)!$ such matrices, but for even numbers $n > 2$, there are other ways. One simple way is as follows: it is easy to show that if $\Omega \in \mathbb{R}^{n \times n}$ is skew-symmetric orthonormal and $Q \in \mathbb{R}^{n \times n}$ is an orthonormal matrix, then $\hat{\Omega} = Q \Omega  Q^{T}$ is also a skew-symmetric orthonormal matrix. We can use this fact to generate new skew-symmetric orthogonal matrices by constructing $\Omega$ by the simple pairing index procedure described above,  constructing an orthonormal matrix $Q$ and then calculating $\hat{\Omega} = Q \Omega  Q^{T}$. To generate orthonormal matrices in $\mathbb{R}^{n \times n}$, we use the fact that if $A$ is skew-symmetric and $I$ is the $n \times n$ identity matrix, $Q = (I-A)^{-1}(I+A)$ is orthonormal. We can generate a skew-symmetric matrix by using any matrix $B \in \mathbb{R}^n$ and computing $A=B-B^T$.

\subsection{Odd dimension extension}
\label{subs:odddimensions}

In Assumption \ref{assumption:1}-\ref{item:assump1-1}, we consider even $n$. If $n$ is odd, there is no matrix $\Omega \in \mathbb{R}^{n \times n}$ that is skew-symmetric and orthonormal. One alternative for handling the odd-dimensional case is to drop one of these properties. On the one hand, being skew-symmetric is important for guaranteeing that $T = \Omega N$ is orthogonal to $N$ (for any $N$), which guarantees that $T$ indeed induces circulations. On the other hand, being orthonormal, is not as important as $\Omega$ being invertible, which is also incompatible with skew-symmetry when $n$ is odd. Invertibility is important because it guarantees that $T = \Omega N$ only vanishes when $N$ vanishes, i.e., when $q \not \in \mathcal{W}$. If $T$ vanishes when $\beta > 0$, the circulation constraint $T^T \mu \geq \beta$ cannot be satisfied and the QP is infeasible, rendering the controller undefined even in some points inside $\mathcal{W}$.  

This constraint in $\Omega$ for $n$ odd can be relaxed if we assume that $\Omega$ is skew-symmetric but is orthonormal only in an $n-1$ dimensional subspace of $\mathbb{R}^n$, being singular in the complementary $1$-dimensional space. Such matrices can be created using a similar construction as the one described in Subsection \ref{subs:genOmega}. However, if $n$ is odd, the index $i$ that has no pair must have its respective row in the identity matrix set to zero.
\section{Simulation Studies}
\label{sec:simulations}
\subsection{4-DoF planar manipulator}
In this simulation, we have a planar 4DOF revolute manipulator and rectangle-shaped links of size $0.70$m$\times0.15$m. The configuration $q \in \mathbb{R}^4$ corresponds to the four joint positions. It must go from an initial horizontal configuration to a vertical configuration while avoiding a circular obstacle with radius 0.5m, centered at $x$=1.5m and $y=1.3$m. 

\begin{figure}
\includegraphics[width=4cm,trim={6cm 5cm 6cm 5cm},clip]{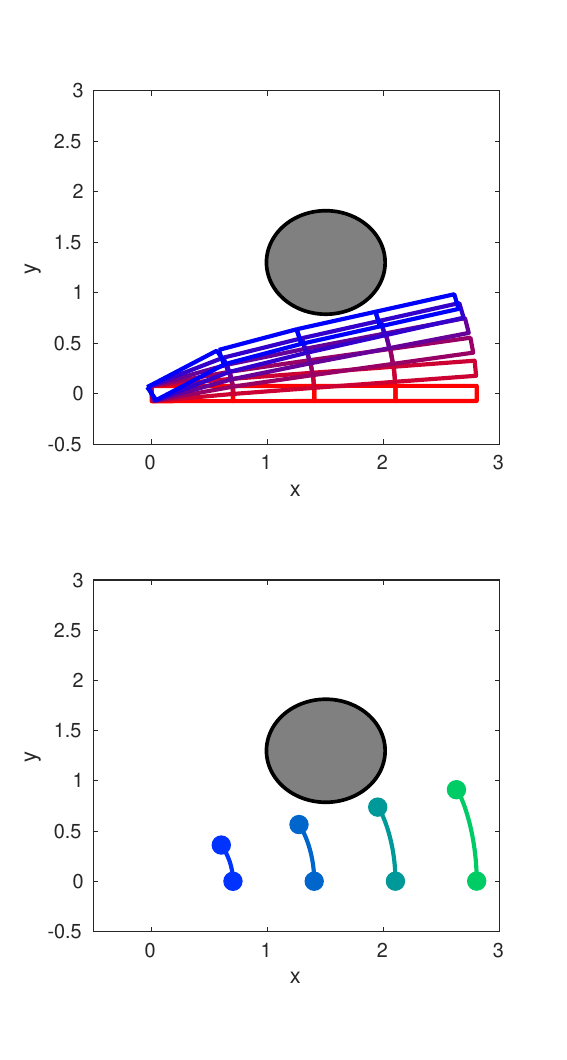}
\includegraphics[width=4cm,trim={6cm 5cm 6cm 5cm},clip]{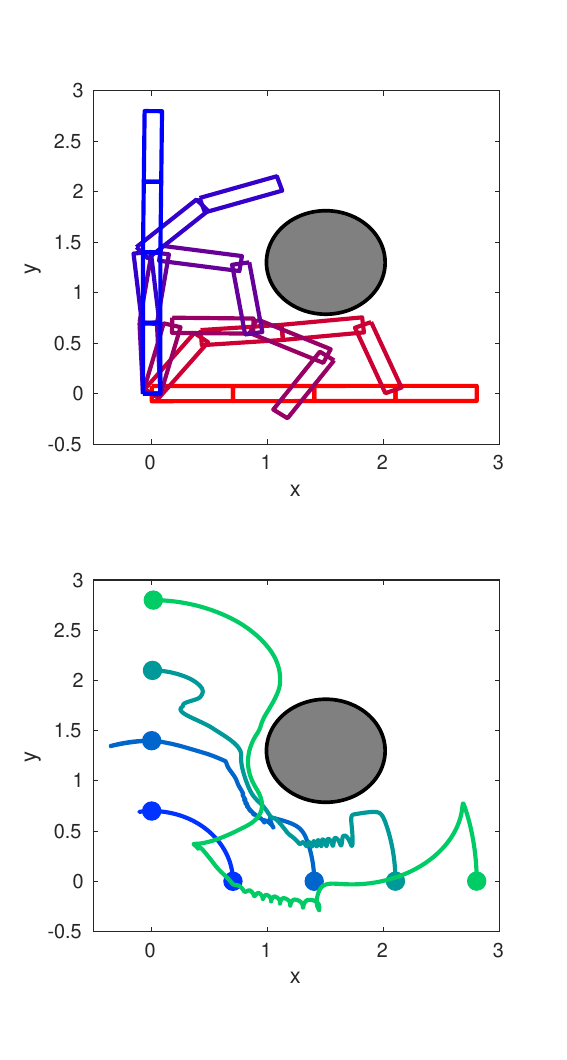}
\centering
\caption{Manipulator trajectory snapshots without (with) the circulation constraint at left (right) column.
}
\label{fig:manippath}
\end{figure}

 We introduce functions $F_i$ to model the constraints and obstacles in configuration space, as described in Subsection \ref{subs:thefunD}. The first constraint is the collision avoidance between the links and the obstacle (i.e., circle). We consider functions $F_i(q)$, $i=1,2,3,4$ that compute the squared distance (measured in meters square) between each rectangular link and the circle. We also consider a joint limit of $\pm 360^o$ for the first joint and $\pm 120^o$ for all the other joints. These constraints also generate obstacles in the configuration space, which are modeled using functions $F_i(q) = 0.02(q_i-q_{\min,i})$ and $F_i(q) = 0.02(q_{\max,i}-q_i)$ ($q_i$ measured in degrees) for minimum and maximum joint limits, respectively. We thus have another 8 functions $F_i$. For the softmin, we used $h=0.04$ and a margin of $\delta=0.15$.
We used  $\alpha(D) = -0.5D$, $\beta(D) = 0.6-2.25D$, and $V(q) = \sum_{i=1}^4 0.4|q_i-q_{g,i}|^{1.5}$ . We also use the constraints of the form $A \mu \geq b$ to impose a maximum absolute value of $1$ to each control input. The employed $\Omega$ matrix is
\begin{equation}
    \Omega=\left[\begin{array}{cccc}
   \ 0  &  \ \frac{\sqrt{3}}{3}  & -\frac{\sqrt{3}}{3}  &  \frac{\sqrt{3}}{3} \\
   -\frac{\sqrt{3}}{3}  &  \ 0 &  \ \frac{\sqrt{3}}{3}  &  \frac{\sqrt{3}}{3} \\
   \ \frac{\sqrt{3}}{3}  & -\frac{\sqrt{3}}{3}  & \ 0  &  \frac{\sqrt{3}}{3} \\
   -\frac{\sqrt{3}}{3}  & -\frac{\sqrt{3}}{3}  & -\frac{\sqrt{3}}{3}  &  0 \\
   \end{array}\right].
\end{equation}

\begin{figure}
\includegraphics[width=9cm,trim={5.2cm 5cm 5cm 5cm},clip]{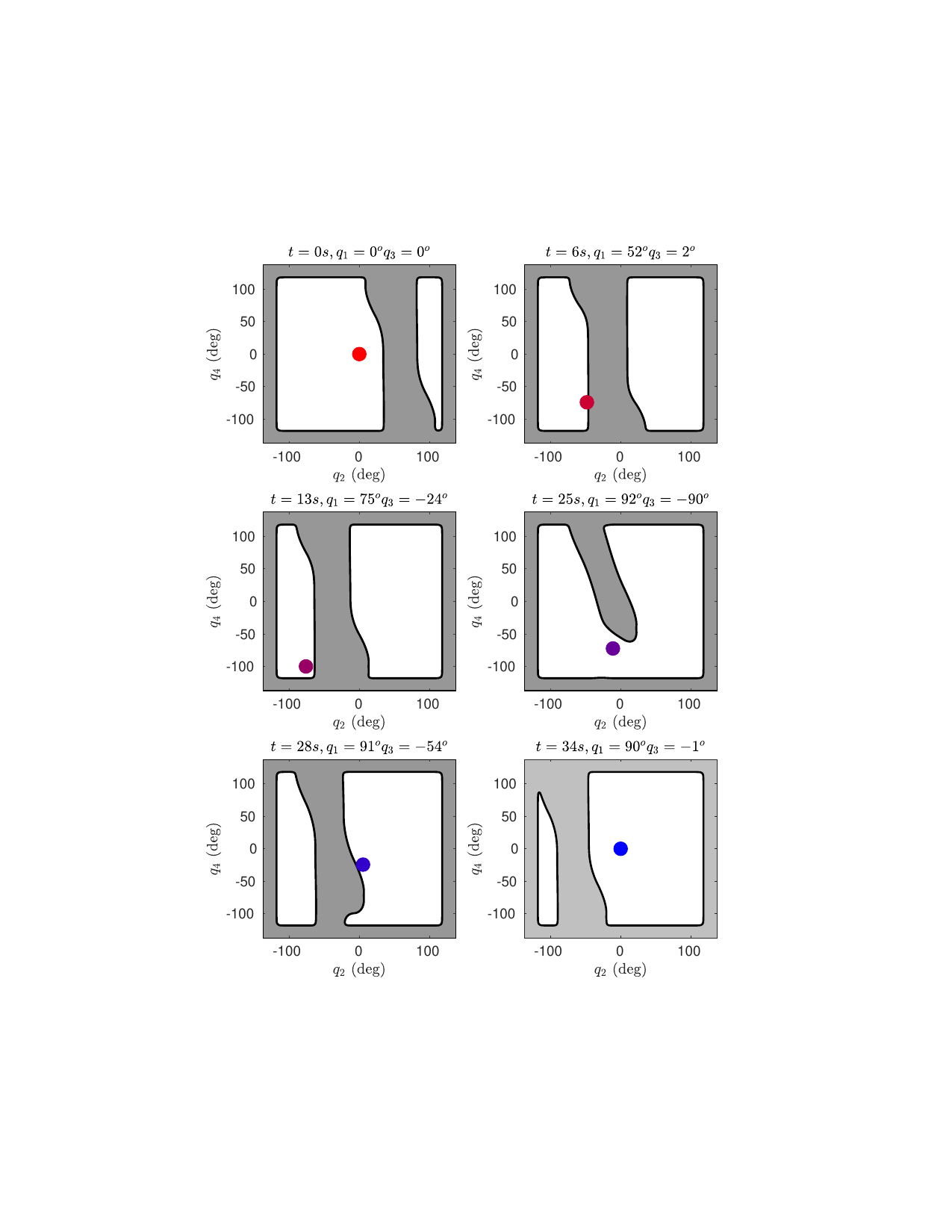}
\centering
\caption{Snapshots of the trajectory of the manipulator in the \emph{configuration space}.
}
\label{fig:manippathconfig}
\end{figure}

Figure \ref{fig:manippath}-(Top-Right) shows the path made by the manipulator in six different timesteps, Figure \ref{fig:manippath}-(Bottom-Right) shows the path made by the junction points of the links/tip of the manipulator, and Figure \ref{fig:Dgraphmanip} shows the evolution of the distance function, which is always greater than 0. It is important to note that without the circulation constraint, the system converges to a spurious equilibrium point when the manipulator touches the circle from below. The snapshots for the simulation without the circulation constraint can be seen in the two figures at the Left column of Figure \ref{fig:manippath}. It is also important to mention that, although the obstacle in the 2D space is very simple (a circle), the resulting obstacle $\mathcal{C}$ in the configuration space is very complex. To illustrate this claim, Figure \ref{fig:manippathconfig} shows snapshots of the obstacle in the \emph{configuration space} during the same six timesteps as Figure \ref{fig:manippath}. Since the configuration space is 4-dimensional, we show 2-dimensional slices of this space by keeping the same configurations $q_1$, $q_3$ but changing $q_2$ and $q_4$. The obstacle region is in gray.

\begin{figure}
\includegraphics[width=8cm,trim={3.8cm 11cm 4cm 11cm},clip]{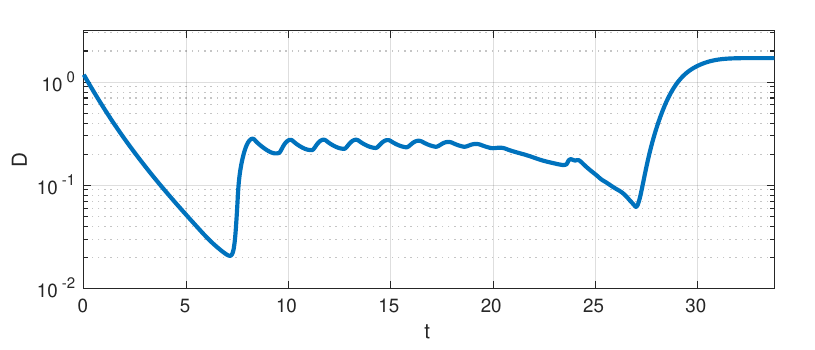}
\centering
\caption{Evolution of the distance function $D$. 
}
\label{fig:Dgraphmanip}
\end{figure}
\subsection{Multiple robot 3D-motion ($n=15$ case)}
In this simulation, we have 5 flying robots for which we can control their linear velocities.  The configuration $q$ is the vector with all their $3D$ positions stacked, and thus $n=15$. They must navigate in a 3D environment, avoiding collision with each other and the obstacles, and reach their designated setpoints.  The robots are represented by colored spheres, their setpoints by small cubes, and the obstacles are in black. The robots must navigate to their respective setpoints while avoiding collisions between themselves and the environment, as shown in Figure~ \ref{fig:envma} for the starting time. The obstacles (three tree-shaped obstacles and a wall-shaped obstacle) are modeled using primitives shapes: spheres, boxes, and cylinders. Overall, the obstacles are modeled using $3$ spheres, $3$ cylinders, and $4$ boxes, resulting in $10$ primitive obstacles.
\begin{figure}[htbp]
\includegraphics[width=8cm]{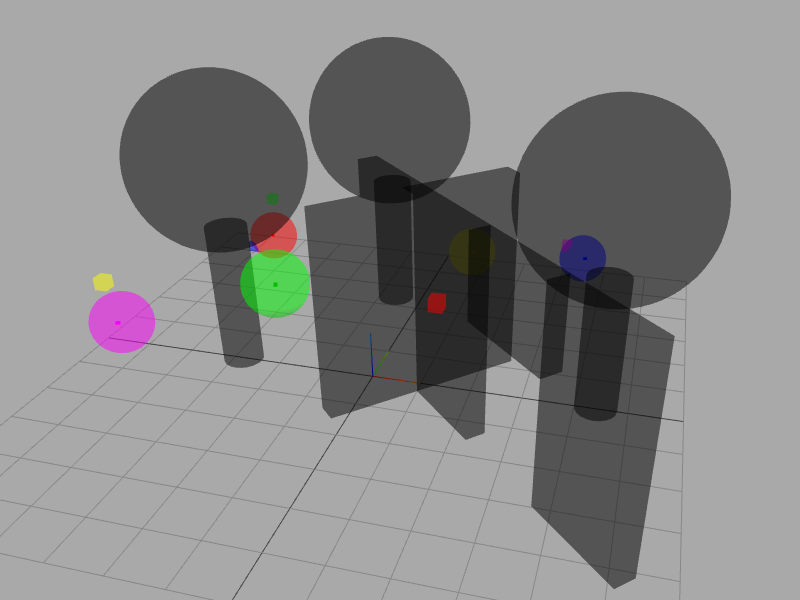}
\centering
\caption{Snapshot of the multi-robot simulation.}
\label{fig:envma}
\end{figure}

\begin{figure*}[ht]
    \centering
    \includegraphics[width=0.18\textwidth]{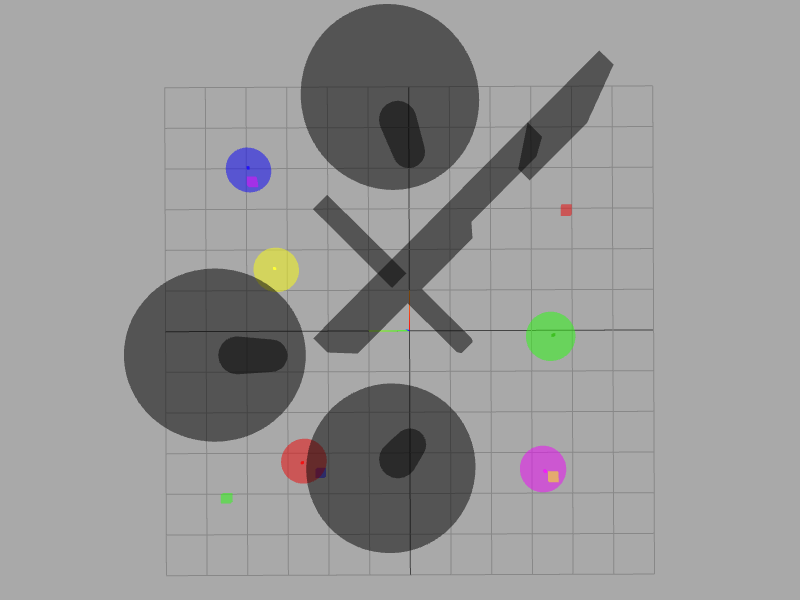}
    \includegraphics[width=0.18\textwidth]{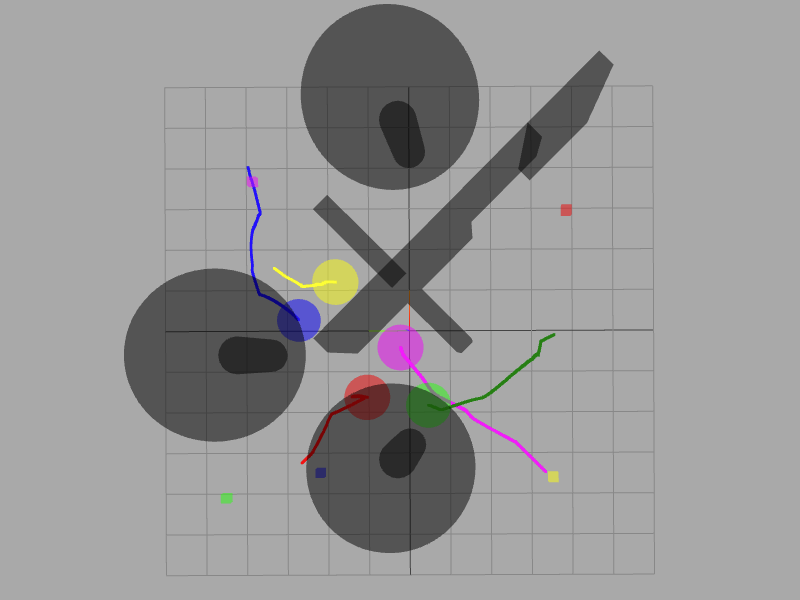}
    \includegraphics[width=0.18\textwidth]{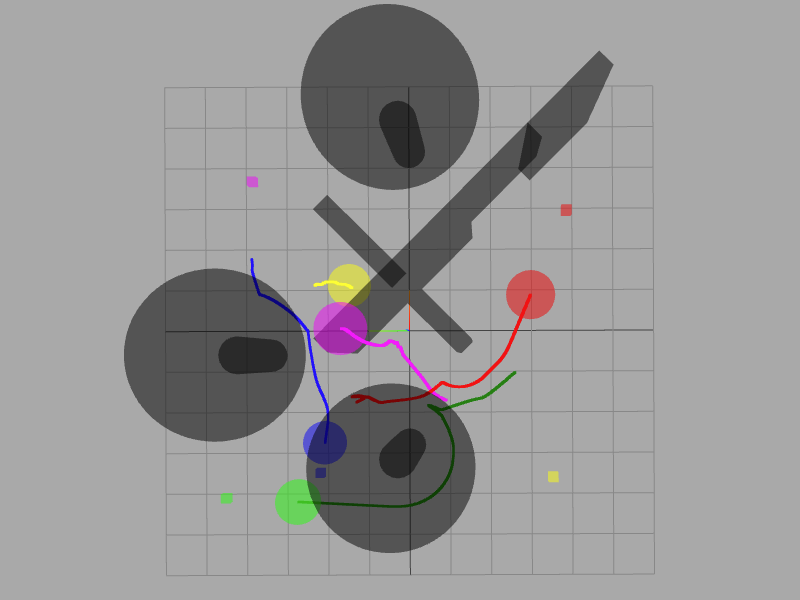}
    \includegraphics[width=0.18\textwidth]{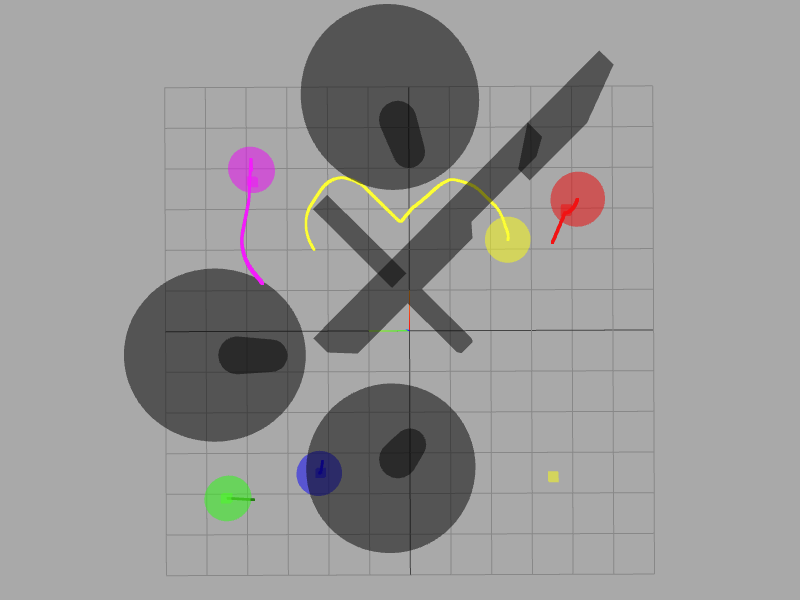}
    \includegraphics[width=0.18\textwidth]{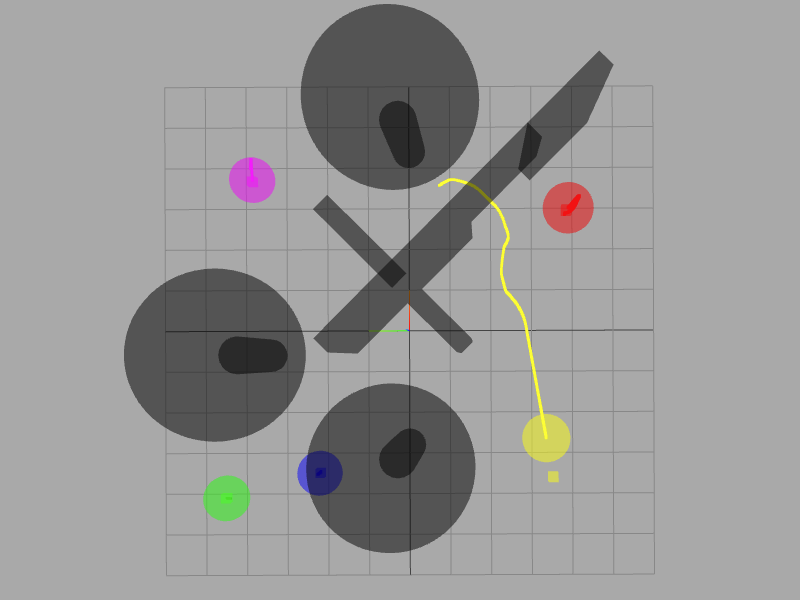}\\
    \vspace{3pt}
    \includegraphics[width=0.18\textwidth]{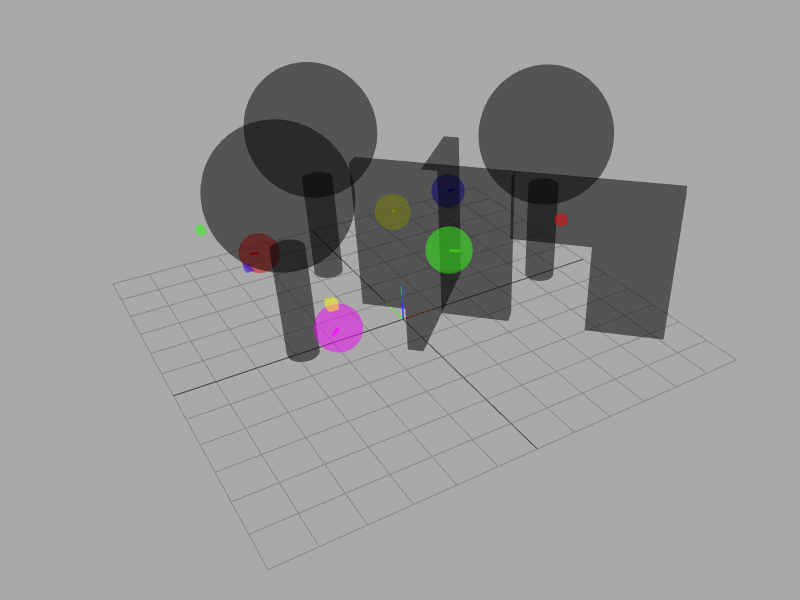}
    \includegraphics[width=0.18\textwidth]{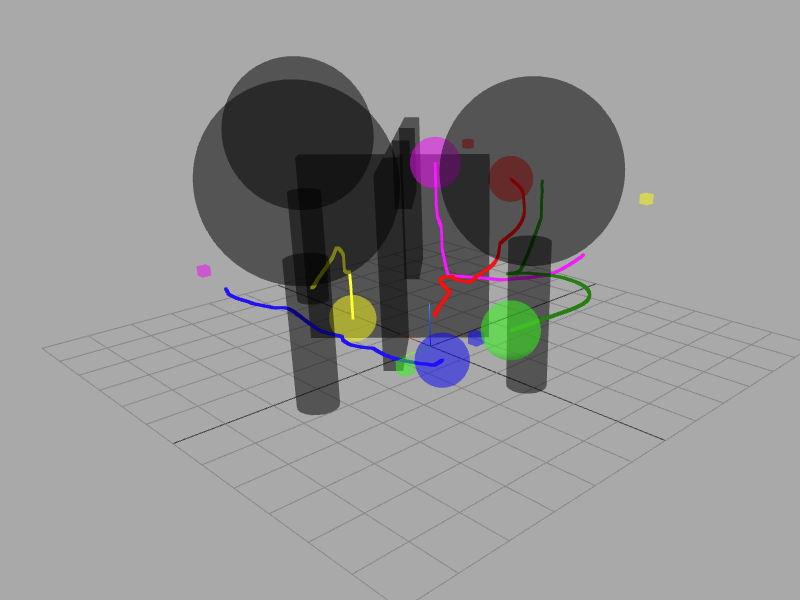}
    \includegraphics[width=0.18\textwidth]{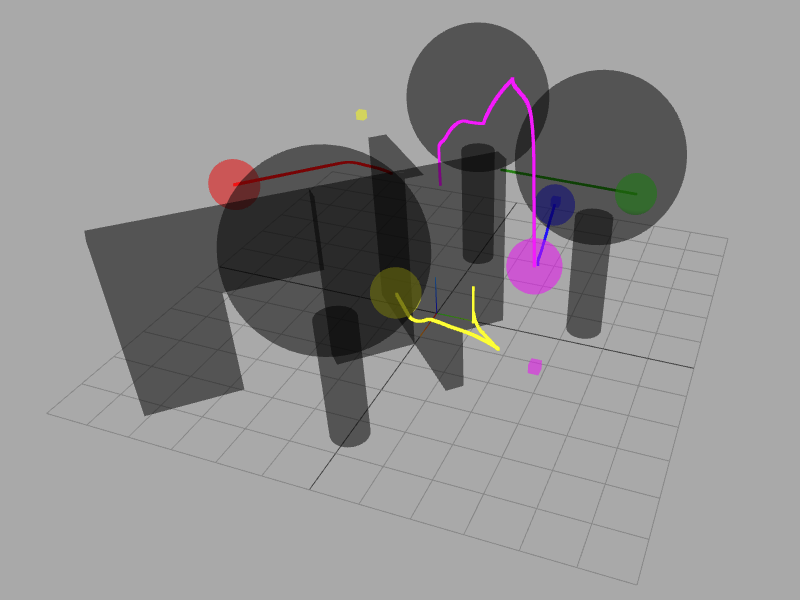}
    \includegraphics[width=0.18\textwidth]{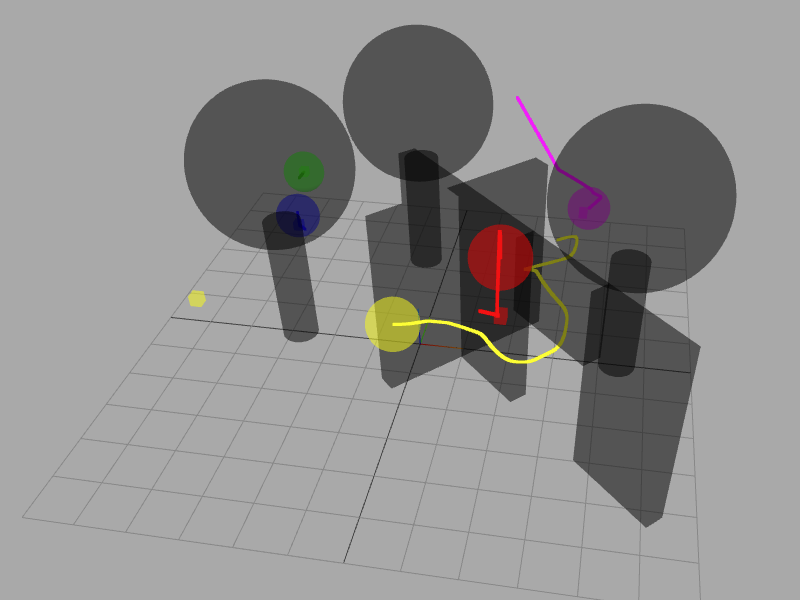}
    \includegraphics[width=0.18\textwidth]{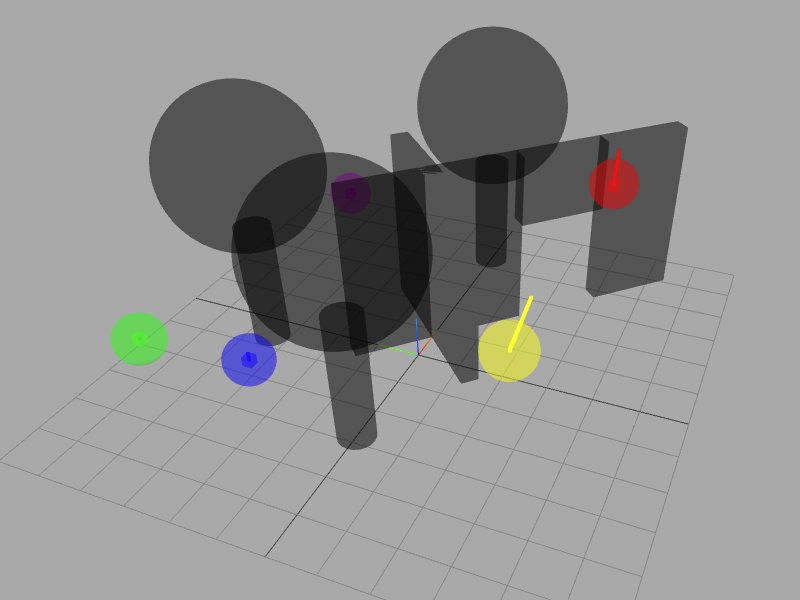}
    \caption{Snapshots for the multi-robot simulation. 
    }
    \label{fig:snapmultrobot}
\end{figure*}

Obstacles are modeled as spheres with radius $r=0.25m$, whereas for inter-agent collision, we model them as cylinders with this same radius and infinite height, with the cylinder axis normal to the ground. In this case, having one of them going below another is undesirable because the downwash from the top drone will push the bottom drone downwards. We then have $5 \times 10$ functions $F_i(q)$  that compute the distance between each robot and each primitive obstacle, $(5 \times 4) /2$ functions $F_i(q)$ that compute the distance between the robots, and finally $6 \times 5$ functions $F_i(q)$ that implement limits for the positions of the robots, since they must lie in a 6m$\times$6m$\times$3m box. Overall, we have $90$ functions $F_i$ that we condense into a single function $D$ using the method described in Subsection \ref{subs:thefunD} with $h = 0.01$ and $\delta=0.13$.

\begin{figure}[ht]
\includegraphics[width=8cm]{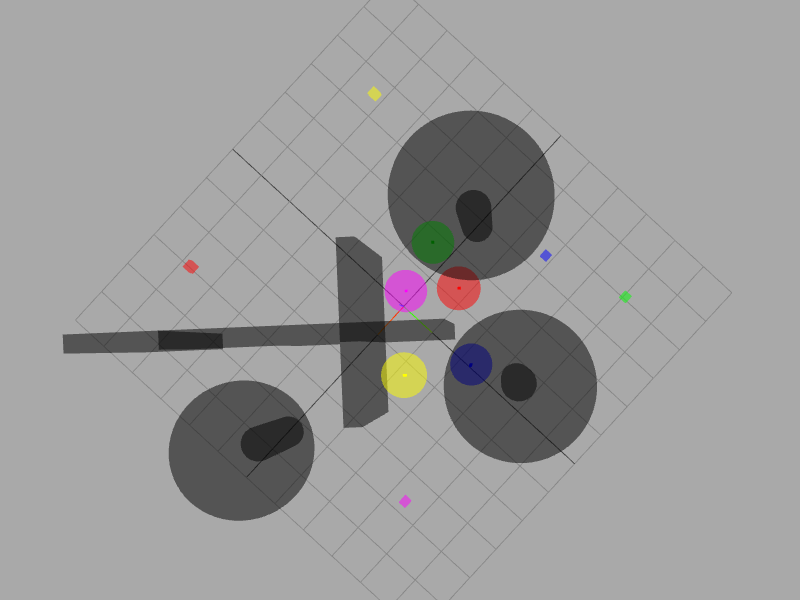}
\centering
\caption{Snapshot of the multi-robot simulation without the circulation inequality.}
\label{fig:multiagent_fail}
\end{figure}

\begin{figure}[ht]
\includegraphics[width=8cm,trim={3.8cm 10.5cm 4cm 11cm},clip]{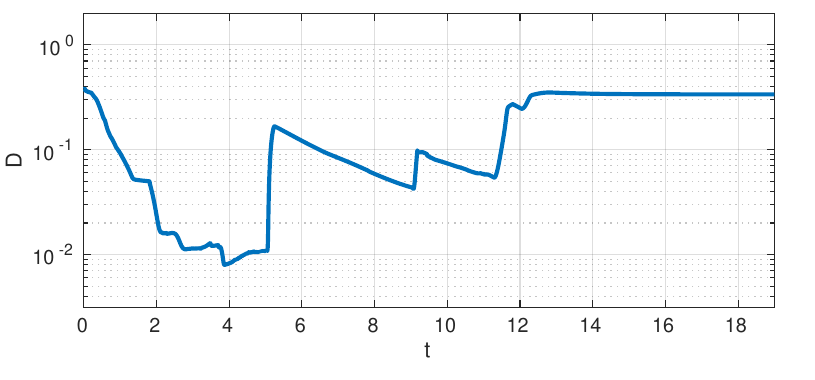}
\centering
\caption{Evolution of the distance function $D$.}
\label{fig:Dgraphmultiagent}
\end{figure}

Moreover, $\alpha(D) = -0.5 D$, $\beta(D) = 0.9 - 6D$, and $V(q) = \frac{1}{2} \sum_{i=1}^5 \|p_i-p_{g,i}\|^2$, in which $p_i \in \mathbb{R}^3$ is the position of the robots and $p_{g,i} \in \mathbb{R}^3$ their respective target positions. There are no constraints of the form $A \mu \geq b$. Since $n=15$ is odd-dimensional, we must implement the workaround discussed in Subsection \ref{subs:odddimensions}, where a skew-symmetric, not orthonormal/invertible matrix 
$\Omega \in \mathbb{R}^{15 \times 15}$ is employed, such that 
$\Omega = [N_2 \ \ {-}N_1 \ \ N_4 \ \ {-}N_3 \ \ldots \ N_{14} \ \ {-}N_{13} \ 0]^T$.

Figure \ref{fig:snapmultrobot} shows snapshots of the simulations. 
In the first row,  5 snapshots are shown, in sequential timeframes, using a top view of the path of each robots. In the second row, we show the same timeframes as the first row, but with a different view. We can see circulation-like behaviors that are essential to accomplish the task and that would not be possible only with the CBF constraint. For example, the yellow robot in the fourth snapshot circulates the cross-shaped obstacle in order to reach its setpoint.

Indeed, Figure~\ref{fig:multiagent_fail} shows the result of applying the controller without the circulation inequality, in which we can see that the robots get stuck into a spurious equilibrium point. Figure~\ref{fig:Dgraphmultiagent} shows the evolution of the positive function $D$. 

\section{Conclusion}
\label{sec:conclusion}
An optimization framework was proposed for controlling a system with integrator dynamics to a goal in the configuration space while considering obstacle avoidance in addition to other constraints. The novelty of the framework is the introduction of a circulation inequality that forces the system to circulate an obstacle - in the configuration space - when sufficiently near it. The motivation for the approach is that often CBF-QP formulations are prone to reach spurious equilibrium points instead of the desired goal. The introduced inequality allows us to extend the range of situations in which the approach is successful with a negligible increase in the complexity of the formulation. We showed several formal results for the controller including continuity, feasibility, and classification of the equilibrium points, and also demonstrated the efficacy of the approach using simulation studies.

\bibliographystyle{ieeetr}  
\bibliography{bibl}
\end{document}